\newcommand{\crs}{\overline{\operatorname{cr}}}
\newcounter{quote}
\small\textit{\BODY}}%
\newcommand{\MyQuote}[1]{\vspace{1cm}
     \parbox{10cm}{\em #1}\hspace*{2cm}($\ast$)\\[1cm]}
     \newcommand{\NMyQuote}[1]{\vspace{1cm}
     \parbox{10cm}{\em #1}\\[1cm]}
\newtheorem{theorem}{Theorem}[section]
\newtheorem{lemma}[theorem]{Lemma}
\theoremstyle{definition}
\theoremstyle{remark}
\begin{document}

\title{A Note on the $k$-colored Crossing Ratio of Dense Geometric Graphs}

\author{Ruy~Fabila-Monroy\thanks{Departamento de Matem\'aticas, CINVESTAV, 
{\tt ruyfabila@math.cinvestav.edu.mx}, Partially supported by CONACYT FORDECYT-PRONACES/39570/2020}}

\maketitle              %
\begin{abstract}
A \emph{geometric graph} is a graph whose vertex set is a set of points in general position in the plane,
and its edges are straight line segments joining these points. We show that for every integer $k \ge 2$,
there exists a constant $c>0$ such that the following holds.
The edges of every dense geometric graph, with sufficiently many vertices, can be colored with $k$ colors, such that
the number of pairs of edges  of the same color that cross is at most $(1/k-c)$ times the total
number of pairs of edges  that cross. The case when $k=2$ and
$G$ is a complete geometric graph, was proved by Aichholzer et al.[\emph{GD} 2019].
\end{abstract}

\section{Introduction}

A \emph{geometric graph}, $G=(V,E)$, is a graph whose vertex set is a set of points in general position\footnote{no three of them collinear} in the plane,
and its edges are straight line segments joining these points. Let $\crs(G)$ be the number of pairs
of edges of $G$ that cross. Let $\chi$ be an edge-coloring of $G$. If $\chi$ uses $k$ colors,
we say that it is a \emph{$k$-coloring}. Let $\crs(G,\chi)$ be the number of pairs
of edges of $G$ that cross and that are of the \emph{same color} in $\chi$. Let $k$ be a positive integer, and let $\chi$ be a 
$k$-coloring of the edges of $G$, in which each edge of $G$ is assigned one of $k$ colors
independently and uniformly at random. The probability that a given pair of crossing edges 
receive the same color is equal to $1/k$; thus,
$E[\crs(G,\chi)]=\frac{1}{k}\crs (G).$
Therefore, there exists a choice of $\chi$ for which \[\frac{\crs(G,\chi)}{\crs (G)} \le \frac{1}{k}.\]

Aichholzer et. al. \cite{2colored} showed that for the case when $G$ is a complete geometric graph,
there exists a constant $c >0$ (independent of $G$) and a $2$-coloring, $\chi$, of the edges
of $G$ such that 
\[\frac{\crs(G,\chi)}{\crs (G)} \le \frac{1}{2}-c.\]
A \emph{dense} graph on $n$ vertices is a graph with at least $d \binom{n}{2}$ edges, for some
positive constant $d$; $d$ is called the \emph{density} of $G$, and we denote it with $d(G)$.
In this paper we generalize the result of~\cite{2colored} to
dense geometric graphs and for $k$-colorings with $k\ge 2$. Specifically, we show the following.
\begin{theorem}\label{thm:main}
 Let $G$ be a dense geometric graph, with sufficiently many vertices, and with density $d>0$. For every integer $k \ge 2$, there exists
 a positive constant $c=c(d,k)$ (depending only on $d$ and $k$) and a $k$-coloring, $\chi$, of the edges
 of $G$ such that 
 \[\frac{\crs(G,\chi)}{\crs (G)} \le \frac{1}{k}-c.\]
\end{theorem}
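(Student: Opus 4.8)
The plan is to beat the random bound on a single large, well-structured piece of $G$ and to color the rest at random. First I would record that a dense geometric graph has \emph{many} crossings: its underlying abstract graph has $m\ge d\binom n2$ edges on $n$ vertices, so once $n$ is large enough that $m>4n$ the crossing lemma gives crossing number at least $m^3/(64n^2)\ge c_1(d)\,n^4$, and $\crs(G)$ is at least this crossing number; since also $\crs(G)\le\binom n4$, we have $\crs(G)=\Theta_d(n^4)$. Thus it suffices to exhibit a $k$-coloring whose monochromatic count is at most $\tfrac1k\crs(G)-\Omega_{d,k}(n^4)$.

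Second, a \emph{coloring template}. Let $F\subseteq E(G)$ and let $\chi_0$ be a $k$-coloring of $F$; extend $\chi_0$ to $E(G)$ by giving every edge outside $F$ an independent uniformly random color. A crossing pair having an edge outside $F$ is monochromatic with probability exactly $\tfrac1k$, so, writing $C_F$ for the number of crossing pairs with both edges in $F$ and $\mu_0$ for the number of monochromatic crossing pairs inside $F$ under $\chi_0$,
\[
\mathbb E\big[\crs(G,\chi)\big]=\mu_0+\tfrac1k\big(\crs(G)-C_F\big).
\]
Hence it is enough to find $F$ with $C_F=\Omega_d(n^4)$ and a $k$-coloring $\chi_0$ of $F$ with $\mu_0\le(\tfrac1k-\gamma)\,C_F$ for some constant $\gamma=\gamma(k)>0$; then a suitable $\chi$ works with $c=c(d,k)>0$.

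Third, the \emph{structure}. The key structural lemma I would prove is that a dense geometric graph contains vertex sets $A,B$ with $|A|,|B|=\Omega_d(n)$, with $\Omega_d(|A|\,|B|)$ edges of $G$ between them, and with a line separating $A$ from $B$. A random bipartition of $V$ already leaves $\Omega_d(n^2)$ edges across it, so the content is in making the pair \emph{linearly separated}; I would do this via the positive-fraction same-type lemma: partition $V$ into $O_d(1)$ classes whose transversal order types are fixed, use pigeonhole to locate a class, or a pair of classes, carrying $\Omega_d(n^2)$ of the edges, and observe that such a class must be in convex position (handling the one-class case by bisecting the convex arc in an appropriate rotation, and the two-class case directly from the fixed order type). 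Take $F$ to be the set of edges of $G$ joining $A$ to $B$; this is a geometric graph with $\Omega_d(n^2)$ edges on at most $n$ vertices, so the crossing lemma again gives $C_F=\Omega_d(n^4)$. Labeling $A=\{a_1,\dots,a_p\}$ and $B=\{b_1,\dots,b_q\}$ in the order of their projections along the separating line yields the \emph{grid property}: $a_ib_j$ crosses $a_{i'}b_{j'}$ if and only if $(i-i')(j-j')<0$.

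Finally, the \emph{coloring of the grid piece}, which I expect to be the main obstacle. The quantity $i-j$ strictly increases along every chain of pairwise-crossing edges of $F$, so ordering $F$ by $i-j$ with ties broken by $i$ yields a linear extension of the ``crossing poset''; I would color $F$ by cutting this linear order into $k$ consecutive blocks of equal size. If two crossing edges of $F$ receive the same color, then fewer than $|F|/k$ edges of $F$ have $i-j$ value strictly between theirs; the crux is to show that only a $(\tfrac1k-\gamma_k)$-fraction of the crossing pairs of $F$ are of this type, and — crucially — uniformly over the dense but otherwise adversarial edge set $F$ sitting inside the $p\times q$ grid, so that concentrating $F$ on a narrow band of diagonals cannot defeat the coloring. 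For the full $N\times N$ grid this is a routine count: when $k=2$ the coloring is just ``sign of $i-j$'', its monochromatic crossings number $2\binom N4$ out of $\binom N2^2$, so the ratio tends to $\tfrac13<\tfrac12$; for general $k$ one groups crossing pairs by the gap $(i'-i)+(j-j')$ and obtains a ratio $O(1/k^2)$, which is below $\tfrac1k$ for all $k\ge2$. Transferring this to arbitrary dense subgrids is the delicate point, which I would handle by tracking, level by level in the value of $i-j$, how the mass of crossing pairs is forced to spread out. Granting this, the template gives $c\ge 24\,\gamma_k\,c_3(d)>0$, provided $n\ge n_0(d,k)$ so that the crossing lemma applies to $G$ and to $F$ and the lower-order error terms are negligible.
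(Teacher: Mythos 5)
Your overall architecture (color a structured subset $F$ well, color the rest uniformly at random, and use the crossing lemma to make the gain on $F$ a constant fraction of $\crs(G)$) matches the paper's reduction, but the structured piece you build is genuinely different from the paper's and, as proposed, it does not work. The fatal step is the ``grid property.'' It is false that for $A,B$ separated by a line $\ell$, labelling by projection order onto $\ell$ makes $a_ib_j$ cross $a_{i'}b_{j'}$ exactly when $(i-i')(j-j')<0$. Take $\ell$ the $x$-axis, $a_1=(0,1)$, $a_2=(1,2)$, $b_1=(-100,-1)$, $b_2=(-99,-1)$: the projection orders are $a_1<a_2$ and $b_1<b_2$, yet the segments $a_1b_1$ and $a_2b_2$ cross (at roughly $(-97,-0.94)$), even though $(1-2)(1-2)>0$. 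Linear separation controls nothing about quadruples with two points in $A$ and two in $B$; neither does the same-type lemma, whose conclusion only constrains transversals with one point per class (for the same reason, your claim that a positive-fraction same-type class ``must be in convex position'' is unfounded). Without the grid property the quantity $i-j$ is not a linear extension of the crossing relation on $F$, and the entire sweep coloring of $F$ collapses.

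Even granting a genuine grid (say $A$ and $B$ on two parallel lines), the step you yourself flag as ``the delicate point'' --- that cutting an \emph{arbitrary} dense subset of the $p\times q$ grid into $k$ consecutive $i-j$ blocks leaves at most a $(\tfrac1k-\gamma_k)$-fraction of its crossings monochromatic --- is asserted, not proved, and it is the heart of the matter; a full-grid count does not transfer to adversarial $F$ without a real argument. The paper avoids both difficulties by changing the structure: it uses the multipartite regularity lemma plus the same-type lemma plus Valtr's theorem to extract $k$ linear-size pairs $(Y_i,Z_i)$, each carrying $\Omega_d(n^2)$ edges, such that every $Y_i$--$Z_i$ edge crosses every $Y_j$--$Z_j$ edge for $i\ne j$ (a condition the same-type lemma \emph{can} certify, since it involves one point from each of four distinct classes), assigns color $i$ to $E(Y_i,Z_i)$, and controls the uncontrollable within-pair crossings not geometrically but by the Erd\H{o}s--Simonovits theorem: $\Omega(n^4)$ copies of $K_{2,2}$ in each $E(Y_i,Z_i)$ force $\Omega(n^4)$ non-crossing pairs, which is all that is needed. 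You would need either to prove your two missing claims (a correct route to a genuine double-chain/grid of size $\Omega_d(n)\times\Omega_d(n)$ carrying $\Omega_d(n^2)$ edges, and the uniform block-coloring bound) or to switch to a device like the paper's $k$ mutually crossing bipartite pieces.
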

\qed

To prove Theorem~\ref{thm:main} we use some results from extremal graph theory and combinatorial geometry.
For completeness, we present them along the way. We follow the expositions
of Diestel's~\cite{diestel} and Matou{\v{s}}ek's~\cite{mat} books.

\section{Preliminaries}

\subsection*{Proof of Theorem~\ref{thm:main} for the case $k=2$}

For $X,Y \subset V(G)$, let $E(X,Y)$ be the set of edges of $G$ that have an endpoint
in $X$ and an endpoint in $Y$; we call them \emph{$X-Y$ edges}. A pair of 
edges in $G$ is called \emph{monochromatic} if they are of the same color; otherwise,
it is called \emph{heterochromatic}.

Before proceeding,
it is convenient to give a high level overview of the steps of the proof in~\cite{2colored},
for the case when $G$ is a complete geometric graph on $n$ vertices, and $k=2$.
They are as follows. 

\begin{itemize}
\item Show that there exists subsets $Y_1,Y_2,Z_1,Z_2$ of vertices
of $G$ such that:
\begin{itemize}
 \item  every $Y_1-Z_1$ edge crosses every $Y_2-Z_2$ edge; and
 \item each $Y_i$ and each $Z_i$ has $c'n$ points for some positive constant $c'$. 
\end{itemize}

\item Color all the $Y_1-Z_1$ edges with ``red''
and all  $Y_2-Z_2$ edges with ``blue''. Let $E':=E(Y_1,Z_1) \cup E(Y_2,Z_2)$.
The number of monochromatic pairs of crossing edges in $E'$ is equal to
\[2\cdot \left(\binom{c'n}{2} \cdot \binom{c'n}{2} \right ) \approx \frac{(c'n)^4}{2}.\]
The number of heterochromatic pairs of crossing edges in $E'$ is equal to
\[(c'n)^4.\] Therefore, at most $1/3$ of the crossings between the edges in $E'$
are monochromatic.

\item Finally, color the remaining  set of edges, $E'':=E \setminus E'$, uniformly and
independently at  random with ``red'' or ``blue''.
Let $C_1$ be the set of pairs of edges in $E''$ that cross. Let $C_2$
be the set of pairs of edges, one in $E'$ and the other in $E''$, that cross.
The probability that a given pair in $C_1 \cup C_2$ is monochromatic is equal to $1/2$.
By linearity of expectation, we have that the expected number of monochromatic pairs
in $C_1 \cup C_2$ is equal to $\frac{1}{2}|C_1|+\frac{1}{2}|C_2|$.
Therefore, there exists a $2$-coloring with at most this number of crossings.
Fix this $2$-coloring, and let $\chi$ be the resulting coloring of $E=E'\cup E''$.
 The $2$-coloring so constructed satisfies that \[\frac{\crs(G,\chi)}{\crs(G)} \le \frac{1}{2}-c,\]
for some constant $c > 0$ depending only on $c'$.
\end{itemize}

There are two possible problems when trying to generalize this approach to the case when
$G$ is not complete:
\begin{itemize}
 \item[(1)] The number of edges in $E'$ might be significantly smaller than the number of edges in $G$.

\item[(2)] Many of the edges in $E(Y_1,Z_1)$ might cross each other, and many of the edges 
in $E(Y_2,Z_2)$ might cross each other,
compared to the number of crossings between an edge in $E(Y_1,Z_1)$ and an edge in $E(Y_2,Z_2)$.
\end{itemize}
(1) implies that even if we manage to color the edges in $E'$ in a good way this might have
little impact on ${\crs(G, \chi)}/{\crs(G)}$ in the end. The problem with (2) is as follows.
Suppose that $|E(Y_1,Z_1)|=|E(Y_2,Z_2)|=:m$; and
that every $Y_1-Z_1$ edge crosses every other $Y_1-Z_1$ edge, and that the same
holds for the $Y_2-Z_2$ edges. If we color $E(Y_1,Z_1) \cup E(Y_2,Z_2)$ as above, then the number of 
monochromatic pair of crossing edges is equal to 
\[\binom{m}{2}+\binom{m}{2} \approx m^2.\]
While, the number of heterochromatic pairs of crossing edges is equal to  \[m^2.\] 
Thus, the number of monochromatic crossings pairs of edges in $E'$ 
is asymptotically $1/2$ of the total number of crossing pairs of edges.

In addition, to generalize this approach to the case when $k >2$, we also need to show that:
\begin{itemize}
 \item[(3)] there exists subsets $Y_1,\dots,Y_k,Z_1,\dots,Z_k$ of vertices
of $G$ such that: 
\begin{itemize}
    \item every $Y_i-Z_i$ edge crosses every $Y_j-Z_j$ edge, for every pair $i \neq j$; and
    \item each $Y_i$ and each $Z_i$ has $c'n$ points for some positive constant $c'$. 
\end{itemize}
\end{itemize}

Having addressed these issues, Theorem~\ref{thm:main} can be derived from the
following lemma.
\begin{lemma}\label{lem:main}
 Let $k \ge 2$ be an integer, and $G=(V,E)$ be a geometric graph on $n$ vertices, with $n$ sufficiently large, and density $d >0$.
 Suppose that there exist positive constants $c_1,c_2$ and $c_3<c_2^2/2$, depending only on $d$ and $k$, such that
 the following hold.
 \begin{itemize}
    \item[$a)$] There exists subsets  $Y_1,\dots,Y_k,Z_1,\dots,Z_k$ of vertices  of $G$, each with $c_1n$ points;
    \item[$b)$] $|E(Y_i,Z_i)| \ge c_2 n^2$ for every $1 \le i \le k$;
    \item[$c)$] every $Y_i-Z_i$ edge crosses every $Y_j-Z_j$ edge, for every pair $i \neq j$;
    \item[$d)$] the number of pairs of $Y_i-Z_i$ edges that cross is at most $(c_2^2/2-c_3)n^4$, for every $1 \le i \le k$.
 \end{itemize}
 Then there  exists
 a positive constant $c=c(d,k)$ (depending only on $d$ and $k$) and a $k$-coloring, $\chi$, of the edges
 of $G$ such that 
 \[\frac{\crs(G,\chi)}{\crs (G)} \le \frac{1}{k}-c.\]
\end{lemma}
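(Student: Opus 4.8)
The plan is to mimic the $k=2$ argument sketched above: color the ``structured'' edges $E':=\bigcup_{i=1}^{k}E(Y_i,Z_i)$ deterministically, color the remaining edges at random, and then check that the savings forced inside $E'$ amount to a constant fraction of $\crs(G)$. The first thing I would observe is that the sets $E(Y_1,Z_1),\dots,E(Y_k,Z_k)$ are pairwise disjoint: if some edge $e$ lay in both $E(Y_i,Z_i)$ and $E(Y_j,Z_j)$ with $i\neq j$, then by $c)$ the edge $e$ would cross itself, which is impossible. Hence we may color every edge of $E(Y_i,Z_i)$ with color $i$ without conflict, and then color each edge of $E'':=E\setminus E'$ independently and uniformly at random among the $k$ colors; call $\chi$ the resulting coloring.

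Next I would split the crossing pairs of $G$ according to $E'$ and $E''$: let $\mathcal A$ be the crossing pairs with both edges in $E'$, let $\mathcal B$ be those with both edges in $E''$, and let $\mathcal C$ be those with one edge in each, so that $\crs(G)=|\mathcal A|+|\mathcal B|+|\mathcal C|$. Inside $\mathcal A$, a monochromatic crossing pair must consist of two edges of a single $E(Y_i,Z_i)$, so by $d)$ the number of monochromatic pairs in $\mathcal A$ is at most $k(c_2^2/2-c_3)n^4$; on the other hand, by $b)$ and $c)$, for every $i\neq j$ each edge of $E(Y_i,Z_i)$ crosses each edge of $E(Y_j,Z_j)$ and these two edges get different colors, so the number of heterochromatic pairs in $\mathcal A$ is at least $\binom{k}{2}c_2^2 n^4$. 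For $\mathcal B\cup\mathcal C$, a pair is monochromatic in $\chi$ with probability exactly $1/k$ — in a $\mathcal C$-pair one color is fixed and the other is uniform, in a $\mathcal B$-pair the two colors are independent and uniform — so by linearity of expectation some outcome of the random coloring makes the number of monochromatic pairs in $\mathcal B\cup\mathcal C$ at most $\tfrac1k(|\mathcal B|+|\mathcal C|)$; fix such an outcome.

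It then remains to add up. Writing $m_{\mathcal A}$ and $h_{\mathcal A}$ for the numbers of monochromatic and heterochromatic crossing pairs inside $E'$, so that $m_{\mathcal A}+h_{\mathcal A}=|\mathcal A|$ and $|\mathcal B|+|\mathcal C|=\crs(G)-m_{\mathcal A}-h_{\mathcal A}$, we get
\[
\crs(G,\chi)\;\le\; m_{\mathcal A}+\tfrac1k\bigl(|\mathcal B|+|\mathcal C|\bigr)
\;=\;\tfrac1k\crs(G)+\tfrac{k-1}{k}\,m_{\mathcal A}-\tfrac1k\,h_{\mathcal A}.
\]
Substituting $m_{\mathcal A}\le k(c_2^2/2-c_3)n^4$ and $h_{\mathcal A}\ge \binom{k}{2}c_2^2 n^4$, the $c_2^2$ contributions cancel and the right-hand side is at most $\tfrac1k\crs(G)-(k-1)c_3 n^4$. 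Since $\crs(G)\le\binom{\binom{n}{2}}{2}<n^4/8$, the quantity $(k-1)c_3 n^4$ exceeds $c\cdot\crs(G)$ for $c:=8(k-1)c_3$, a positive constant depending only on $d$ and $k$; hence $\crs(G,\chi)\le(\tfrac1k-c)\crs(G)$, which is what we wanted.

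I do not expect a serious obstacle here, since the structural content has been packaged into hypotheses $a)$–$d)$; the one point to get right is the cancellation in the last display. It works precisely because $d)$ lets us bound the monochromatic pairs of $E'$ by $k(c_2^2/2-c_3)n^4$ while $b)$ and $c)$ force at least $\binom{k}{2}c_2^2 n^4$ heterochromatic pairs there, and the assumption $c_3<c_2^2/2$ (which is also what makes $d)$ non-vacuous) guarantees that the leftover is a genuinely positive multiple of $n^4$, hence of $\crs(G)$. The disjointness of the $E(Y_i,Z_i)$, needed so that the deterministic coloring of $E'$ is well defined, is the only other thing to check and follows at once from $c)$.
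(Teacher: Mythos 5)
Your proof is correct and follows essentially the same route as the paper: color $E(Y_i,Z_i)$ with color $i$, color the remaining edges uniformly at random, and use $b)$--$d)$ to play the monochromatic crossings inside $E'$ (at most $k(c_2^2/2-c_3)n^4$) against the heterochromatic ones (at least $\binom{k}{2}c_2^2n^4$). The only divergence is the final accounting: the paper first bounds $\crs(G',\chi')/\crs(G')\le 1/k-c'$ and then appeals to the crossing number theorem to bound $\crs(G')/(\crs(G')+|C_1|+|C_2|)$ from below, whereas you turn the additive saving $(k-1)c_3n^4$ into a multiplicative one via the trivial bound $\crs(G)<n^4/8$ --- a slightly cleaner finish, and your explicit remark that condition $c)$ forces the sets $E(Y_i,Z_i)$ to be disjoint is a point the paper leaves implicit.
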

\begin{proof}
 Let 
 \[E'=\bigcup_{i=1}^k E(Y_i,Z_i),\] 
 and let $G'$ be the geometric graph with vertex set equal to $V$ and edge set equal to $E'$.
 Let $\chi'$ be the edge coloring of $G'$, in which all the $Y_i-Z_i$ edges receive color $i$.
 For every $1 \le i \le k$, let $s_i$ be the number of pairs of $Y_i-Z_i$ edges that cross.
 Thus, $s_i \le (c_2^2/2-c_3)n^4.$ 
 The number of heterochromatic crossings pairs of edges  in $G'$ is at least \[\frac{k(k-1)}{2}c_2^2n^4;\] 
 and the number of monochromatic crossings pairs of edges in $G'$ 
 is equal to \[\sum_{i=1}^k s_i \le k\left (\frac{c_2^2}{2}-c_3 \right )n^4.\]
 
 We have that 
 \[\frac{\crs(G',\chi')}{\crs (G')} \le \frac{\sum_{i=1}^k s_i}{\frac{k(k-1)}{2}c_2^2n^4+\sum_{i=1}^k s_i}.\]
 This is maximized when $\sum_{i=1}^k s_i$ is maximized.
 Therefore, 
 \begin{align*} 
 \frac{\crs(G',\chi')}{\crs (G')}  & \le \frac{ k\left (\frac{c_2^2}{2}-c_3 \right )n^4}{\frac{k(k-1)}{2}c_2^2n^4+k\left (\frac{c_2^2}{2}-c_3 \right )n^4} \\
 & = \frac{\frac{c_2^2}{2}-c_3}{k\left (\frac{c_2^2}{2}-\frac{1}{k}c_3 \right)} \\
 & =\frac{\frac{c_2^2}{2}-\frac{1}{k}c_3}{k\left (\frac{c_2^2}{2}-\frac{1}{k}c_3 \right)}-\frac{c_3-\frac{1}{k}c_3}{k\left (\frac{c_2^2}{2}-\frac{1}{k}c_3 \right)}\\
 & =\frac{1}{k}-\frac{c_3-\frac{1}{k}c_3}{k\frac{c_2^2}{2}-c_3}\\
 & =\frac{1}{k}-c',
 \end{align*}
 with $c':=\left (c_3-\frac{1}{k}c_3 \right ) /\left (k\frac{c_2^2}{2}-c_3 \right )$. Since $c_3-\frac{1}{k}c_3 >0$ and $k\frac{c_2^2}{2}-c_3>0$, we have
 that $c'>0$. 
 
 Let $E'':=E\setminus E'$. Let $C_1$ be the set of pairs of edges in $E''$ that cross.
 Let $C_2$ be the set of pairs of edges, consisting of an edge in $E''$ and an edge in $E'$, that cross.
 By the previous probabilistic argument and linearity of expectation, there  exists a $k$-coloring, $\chi''$, of $E''$ such
 that the number of monochromatic pairs in $C_1 \cup C_2$ is at most $|C_1|/k +|C_2|/k$.
 Let $\chi:=\chi'\cup \chi''$.
 We have that
\begin{eqnarray}
 \frac{\crs(G,\chi)}{\crs(G)} & \le & \frac{\crs(G',\chi')+C_1/k+C_2/k}{\crs(G')+C_1+C_2} \nonumber \\
                                & \le & \frac{(1/k-c')\crs(G')+C_1/k+C_2/k}{\crs(G')+C_1+C_2} \nonumber \\   
                                & = & \frac{1}{k}-c' \cdot \frac{\crs(G')}{\crs(G')+C_1+C_2} \nonumber \\
                                & \le & \frac{1}{k}-c, \nonumber
\end{eqnarray}
with $c:=c' \cdot \frac{\crs(G')}{\crs(G')+C_1+C_2}$.
By the Crossing number theorem(see ~\cite{mat}), every
dense geometric graph on $n$ vertices has $\Theta(n^4)$ crossings. Thus,
$\crs(G'), C_1, C_2 $ are $\Theta(n^4)$, and $c >0$.
\end{proof}

In what follows, let $G:=(V,E)$ be a dense geometric graph on $n$ vertices and density equal to $d$, and  let $k\ge 2$.
We now give the necessary background needed to show that conditions $a),b),c)$ and $d)$ of Lemma~\ref{lem:main}
hold for $G$.

\subsection*{The Same Type Lemma}

Let $S$ be a set of $n$ points in general position  in the plane.
To every triple $(p,q,r)$ of points of $S$ assign a  $``-"$ if $r$ is to the left
of the directed line from $p$ to $q$, and assign a $``+"$ if $r$ lies to the right 
of the directed line from $p$ to $q$. This assignment is called the \emph{order type} of $S$. 
Order types were introduced  by Goodman and Pollack~\cite{m_sorting}. They serve 
as a combinatorial abstraction of the convex hull containment relationships of points sets. 
Let $P$ and $Q$ be two sets of $n$ points in general position in the plane. Let $f$
be a bijection from $P$ to $Q$. We say that $f$ \emph{preserves the order type}
if every triple $(p,q,r)$ of points in $P$ has the same sign as $(f(p),f(q),f(r))$.
If such an $f$ exists we say that $P$ and $Q$ have the same order type.
 In this case, two edges with endpoints in
 $Q$ cross if and only if the corresponding edges in $P$ cross.

Let $(X_1,..., X_t)$ be a tuple of finite disjoint sets of points in the plane,
such that $\bigcup_{i=1}^t X_i$ is in general position.
A \emph{transversal} of $(X_1,..., X_t)$ is a tuple of points
$(x_1,\dots,x_t)$ such that $x_i \in X_i$, for all $i$.
We say that $(X_1,..., X_t)$ has \emph{same-type transversals} if  the following holds.
For every two of its transversals $(x_1,\dots, x_t)$
and $(x_1',\dots, x_t')$, the mapping $x_i \mapsto x_i'$ preserves
the order type between $\{x_1,\dots, x_t\}$ and $\{x_1',\dots, x_t'\}$. 
B\'ar\'any and Valtr~\cite{positive_es} proved the following.
\begin{theorem}[Same-type lemma]\label{thm:same_type}
For every positive integer $t$ there exists a constant $c(t)>0$ such that that the following
holds. Let $X$ be a finite set of points in general position in the plane;
 and let $X_1,\dots,X_t$ be a partition of $X$. Then there exist subsets
 $X_1' \subseteq X_1, \dots, X_t'\subseteq X_t$ such that $(X_1',\dots,X_t')$  has
 same-type transversals and $|X_i'| \ge c(t) |X_i|$, for all $i=1,\dots, t$. 
\end{theorem}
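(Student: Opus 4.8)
\emph{Overview.} The plan is a combinatorial reduction to the case $t=3$, followed by a geometric argument for $t=3$ built on the Ham-Sandwich theorem.

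\emph{Reduction to $t=3$.} The order type of a planar point set in general position is, by definition, its sign function on triples; so for pairwise disjoint $X_1,\dots,X_t$ (with $\bigcup_i X_i$ in general position) the tuple has same-type transversals if and only if for every $i<j<k$ the orientation of $(x_i,x_j,x_k)$ is the same for all $x_i\in X_i$, $x_j\in X_j$, $x_k\in X_k$ (call $(X_i,X_j,X_k)$ \emph{concordant} in that case)~--- a transversal is a set of $t$ distinct points, and each of its triple orientations depends on only three coordinates, which are chosen independently. Concordance passes to subsets. Hence it suffices to prove the $t=3$ case; if it supplies a constant $\delta=c(3)>0$ so that any three pairwise disjoint sets in general position have subsets of relative size $\ge\delta$ forming a concordant triple, then, given $X_1,\dots,X_t$, I would treat the $\binom t3$ triples one after another, each time replacing the current $X_i,X_j,X_k$ by the $\delta$-subsets supplied by the $t=3$ case (heredity keeps the already-treated triples concordant). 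Since each index lies in $\binom{t-1}{2}$ triples, this ends with $|X_i'|\ge\delta^{\binom{t-1}{2}}|X_i|$, so $c(t):=\delta^{\binom{t-1}{2}}$ works.

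\emph{The case $t=3$.} Given pairwise disjoint $A,B,C$ in general position, the target is: after discarding at most a constant fraction of each, to obtain subsets $A',B',C'$ and three lines $\ell_1,\ell_2,\ell_3$ such that $\ell_i$ puts the $i$-th subset strictly on one side and the other two strictly on the other side. Such a configuration is concordant: with respect to $(\ell_1,\ell_2,\ell_3)$ the three subsets lie in three open cells of the line arrangement whose sign vectors are pairwise at Hamming distance $2$ (coordinate $i$ separates the $i$-th set from the other two), and a short check shows that the sequence of sign vectors met along any line of the plane has at most four distinct entries, forming a monotone path in $\{+,-\}^3$, and no such quadruple contains three pairwise-distance-$2$ vectors; hence no line meets all three cells, so no transversal $(a,b,c)$ of $(A',B',C')$ is collinear, and the continuous, nowhere-vanishing orientation $\operatorname{sgn}\det(b-a,c-a)$ is constant on the connected set $(\text{cell of }A')\times(\text{cell of }B')\times(\text{cell of }C')$, in particular on $A'\times B'\times C'$. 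To reach such a configuration I would iterate Ham-Sandwich: first pass to subsets with pairwise disjoint convex hulls (repeatedly bisect one set by a line that simultaneously bisects a second, keeping the surviving halves on opposite sides), and then keep shrinking, by constant fractions, any set whose convex hull meets the convex hull of the union of the other two, until each set is strictly separable from the union of the other two; the separating lines found at the end are the desired $\ell_1,\ell_2,\ell_3$.

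\emph{Main obstacle.} The reduction and the implication ``three separators $\Rightarrow$ concordant'' are routine; the content lies in showing that the separation procedure terminates after boundedly many constant-fraction shrinkings. This is delicate for near-degenerate inputs: when $A,B,C$ are nearly collinear the orientation of a transversal is perturbation-sensitive, and a careless sequence of cuts produces a ``linear'' arrangement~--- two near-parallel separators with one set sandwiched between the other two~--- which is \emph{not} concordant, so the cuts must be steered so as to force a genuinely cyclic (triangular) separation. Carrying out this case analysis with only constant-factor loss (and thereby fixing an explicit $\delta=c(3)$, which then propagates to $c(t)=c(3)^{\binom{t-1}{2}}$) is where essentially all the work of the lemma sits.
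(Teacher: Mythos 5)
First, a point of reference: the paper does not prove this theorem at all --- it is quoted from B\'ar\'any and Valtr~\cite{positive_es} --- so your proposal is being measured against the standard proof (e.g.\ Matou\v{s}ek~\cite{mat}, \S 9.3) rather than against anything in the text. Two of your three pieces are correct and are exactly the standard ones: the reduction to $t=3$ (same-type transversals $\Leftrightarrow$ every triple of classes is concordant; process the $\binom{t}{3}$ triples one at a time; each index lies in $\binom{t-1}{2}$ of them, giving $c(t)=c(3)^{\binom{t-1}{2}}$), and the implication ``each $X_i'$ is separated by a line from the union of the other two $\Rightarrow$ concordant'' (your arrangement/Hamming-distance argument is a correct proof of the planar case of Matou\v{s}ek's Lemma~9.3.2, via connectedness of the product of cells and non-vanishing of the orientation determinant).

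The genuine gap is the one you name yourself: you never prove the $t=3$ separation step, and the iterative scheme you sketch (``keep shrinking, by constant fractions, any set whose hull meets the hull of the other two, until separation'') has no termination bound --- without one, the total fraction retained could degenerate, and the whole lemma is precisely a quantitative statement. Moreover the difficulty you anticipate (steering the cuts away from a ``linear'' arrangement in near-collinear configurations) is a red herring; no such case analysis is needed. The standard argument disposes of each of the three required separations with a \emph{single} ham-sandwich cut and a fixed factor-of-$2$ loss: for the partition $\{1\}$ versus $\{2,3\}$, take a line $h$ simultaneously bisecting $X_2$ and $X_3$; one open side $h^+$ contains at least $(|X_1|-2)/2$ points of $X_1$ (general position puts at most two points of each set on $h$); set $Y_1:=X_1\cap h^+$ and $Y_2:=X_2\cap h^-$, $Y_3:=X_3\cap h^-$, so that $h$ strictly separates $\operatorname{conv}(Y_1)$ from $\operatorname{conv}(Y_2\cup Y_3)$ while each set keeps essentially half its points. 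Since separation is inherited by subsets, doing this once for each of the three partitions yields the three lines $\ell_1,\ell_2,\ell_3$ of your configuration with $c(3)$ roughly $1/8$. Inserting this step turns your outline into a complete proof; without it, the crux of the lemma is missing.
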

Both order types and the Same-type lemma
can be  generalized to $\mathbb{R}^d$. For our purposes we only need
 the planar case.  We use the Same-type lemma to show the existence of
 the $Y_i$ and $Z_i$ subsets in condition $a)$ of Lemma~\ref{lem:main}.

\subsection*{The Erd\H{o}s-Simonovits Theorem}

 To show condition $d)$ of Lemma~\ref{lem:main} we need to show that many of the $Y_i-Z_i$ edges do not cross.
 Let $H$ be the bipartite geometric graph with partition $(Y_i,Z_i)$ and whose edge set is equal to $E(Y_i,Z_i)$.
  For every subgraph of $H$ isomorphic
to $K_{2,2}$ we get at least a pair of non-crossing edges. We want to find many
copies of $K_{2,2}$ in $H$.
\begin{theorem}[Erd\H{o}s-Simonovits theorem]\label{thm:es_sim}
Let $t$ be a positive integer and let $G$ be a graph on $n$ vertices and 
with $d \binom{n}{2}$ edges, where $d \ge C n^{-1/t^2}$ for a certain
sufficiently large constant $C$. Then $G$ contains at least
\[c d^{t^{2}} n^{2t}\]
copies of $K_{t,t}$, where $c=c(t) > 0$ is a constant.
\end{theorem}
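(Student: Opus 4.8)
The goal is to prove the Erdős--Simonovits theorem: a graph $G$ on $n$ vertices with $d\binom{n}{2}$ edges, $d \ge C n^{-1/t^2}$, contains $\ge c\, d^{t^2} n^{2t}$ copies of $K_{t,t}$. The standard route is a double-counting (Kővári--Sós--Turán-style) argument combined with convexity. First I would count \emph{stars}: pairs consisting of a $t$-subset $T \subseteq V(G)$ together with a common neighbor $v$ of all vertices of $T$. Counting by the center $v$, the number of such configurations is $\sum_{v} \binom{\deg(v)}{t}$. Since $\sum_v \deg(v) = 2\,d\binom{n}{2} \approx d n^2$, and $x \mapsto \binom{x}{t}$ is convex, Jensen's inequality gives $\sum_v \binom{\deg(v)}{t} \ge n \binom{\bar d}{t}$ where $\bar d = d(n-1)$ is the average degree; this is $\Theta(d^t n^{t+1})$, using $\bar d \ge C n^{1 - 1/t^2} \to \infty$ so the binomial is well-approximated by $\bar d^t / t!$.

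Next I would count the same configurations by the $t$-set $T$: this quantity equals $\sum_{T} \binom{\operatorname{codeg}(T)}{t}\cdot$(correction) — more precisely, $\sum_{T \in \binom{V}{t}} \operatorname{codeg}(T)$ equals the star count, where $\operatorname{codeg}(T)$ is the number of common neighbors of $T$. Applying convexity \emph{again}, now over the $\binom{n}{t} \approx n^t/t!$ choices of $T$, the number of pairs $(T, \{v_1,\dots,v_t\})$ with $\{v_1,\dots,v_t\}$ a common $t$-subset of neighbors of $T$ — which is exactly (a constant times) the number of labelled $K_{t,t}$'s — is at least $\binom{n}{t}\binom{\overline{\operatorname{codeg}}}{t}$, where $\overline{\operatorname{codeg}}$ is the average codegree. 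Combining the lower bound on the star count from the first paragraph with $\sum_T \operatorname{codeg}(T) = \binom{n}{t}\overline{\operatorname{codeg}}$ yields $\overline{\operatorname{codeg}} \gtrsim d^t n$, and then the second convexity step gives $\gtrsim n^t \cdot (d^t n)^t = d^{t^2} n^{2t}$ copies, up to constants depending only on $t$. Finally one divides by the multiplicity with which an unlabelled $K_{t,t}$ (or a degenerate configuration) is counted — a bounded factor depending only on $t$ — to get the stated bound.

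The main obstacle, and the reason the hypothesis $d \ge C n^{-1/t^2}$ is needed, is controlling the error terms when passing between $\binom{x}{t}$ and $x^t/t!$ and, crucially, ensuring the average codegree is large enough that the \emph{second} application of convexity is not swamped by degenerate terms (configurations where the $v_i$ or the vertices of $T$ are not distinct, i.e. where the "$K_{t,t}$" is actually a smaller structure). Concretely, after the first Jensen step one knows $\overline{\operatorname{codeg}} = \Theta(d^t n)$; for $\binom{\overline{\operatorname{codeg}}}{t} \ge c\,\overline{\operatorname{codeg}}^{\,t}$ one needs $\overline{\operatorname{codeg}} \ge t$, i.e. $d^t n \gtrsim 1$, i.e. $d \gtrsim n^{-1/t}$ — which is implied by $d \ge C n^{-1/t^2}$. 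The slightly stronger exponent $1/t^2$ (rather than $1/t$) is what is actually required to absorb the lower-order terms uniformly at \emph{both} convexity steps and to keep the multiplicative constant clean; I would track these thresholds carefully but not belabor the arithmetic, since each is a routine estimate. A clean way to organize this is to first prove the special case $t=1$ (trivial) as a sanity check, then present the two double-counting steps as a single displayed chain of inequalities with the convexity and the threshold conditions flagged, and conclude by dividing out the bounded overcount.
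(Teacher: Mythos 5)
The paper does not prove this statement: it quotes it from Erd\H{o}s--Simonovits, adapting the exposition in Matou\v{s}ek's book, so there is no internal proof to compare against. Your double-counting argument is the standard K\H{o}v\'ari--S\'os--Tur\'an-style supersaturation proof and is correct in outline: counting stars $(T,v)$ by centers and applying convexity gives $\sum_v \binom{\deg v}{t}\ge n\binom{\bar d}{t}=\Theta(d^t n^{t+1})$, hence average codegree $\overline{\operatorname{codeg}}=\Theta(d^t n)$ over the $\binom{n}{t}$ sets $T$, and a second convexity step gives at least $\binom{n}{t}\binom{\overline{\operatorname{codeg}}}{t}=\Omega\bigl(d^{t^2}n^{2t}\bigr)$ pairs $(T,S)$, each yielding a $K_{t,t}$ with bounded overcounting. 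Two small remarks. First, your worry about ``degenerate'' configurations with $T\cap S\neq\emptyset$ is vacuous for graphs: a common neighbor of $T$ cannot lie in $T$ (no loops), so $S$ and $T$ are automatically disjoint; the only multiplicity is the factor $2$ from swapping sides. Second, your claim that the exponent $1/t^2$ is ``actually required'' is not right for graphs: the argument only needs $d\gtrsim n^{-1/t}$ so that $\overline{\operatorname{codeg}}\gtrsim t$, and the paper's hypothesis $d\ge Cn^{-1/t^2}$ is simply a stronger assumption (inherited from the hypergraph version in the cited sources) under which your estimates go through a fortiori; neither inaccuracy affects the validity of the proof. To make the sketch a complete proof you would only need to fix the standard convex extension of $x\mapsto\binom{x}{t}$ (e.g.\ set it to $0$ for $x<t-1$) so both Jensen steps are legitimate.
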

The Erd\H{o}s-Simonovits theorem was proved in~\cite{es_sim}. Where it is stated
for uniform hypergraphs. We adapted the exposition of~\cite{mat} for the case
of ordinary graphs.
Theorem~\ref{thm:es_sim} implies that if a graph has $c n^2$ edges then
it has at least $c'n^4$ copies of $K_{2,2}$ for some constant $c'$ depending
on $c$. In addition, by the Crossing number theorem, it also has $\Theta(n^4)$
pairs of crossing edges. 
We have that

\MyQuote{ the existence of $c_2$ and condition $b)$ in Lemma~\ref{lem:main},
imply the existence of $c_3$ and condition $d)$ in Lemma~\ref{lem:main}.}

\subsection*{Szemer\'edi's Regularity Lemma}

The tool we need to prove condition $b)$ of Lemma~\ref{lem:main} is a variant of the celebrated Szemerédi's regularity lemma.
 Let $A,B$ be two disjoint subsets of vertices of $G$. The \emph{density of the
 pair} $(A,B)$ is defined as \[d(A,B):=\frac{|E(A,B)|}{|A||B|}.\]
Let $\epsilon >0$. The pair $(A,B)$ is called an
\emph{$\epsilon$-regular pair} if it satisfies the following.
For all $X \subset A$ and $Y \subset B$, such that
\[|X| \ge \epsilon |A| \textrm{ and } |Y| \ge \epsilon|B|,\]
we have that 
\[|d(X,Y)-d(A,B)| \le \epsilon. \]
Let $P:=\{V_0,V_1,\dots,V_t\}$ be a partition of $V$ in which $V_0$
is allowed to be empty. We call $P$ an \emph{$\epsilon$-regular partition} of $G$
if it satisfies the following properties.
\begin{enumerate}
 \item $|V_0| \le \epsilon n;$
 \item $|V_1|=\cdots=|V_t|;$
 \item all but at most $\epsilon t^2$ of the pairs $(V_i,V_j)$ are $\epsilon$-regular.
\end{enumerate}

In 1975, Szemer\'edi~\cite{regularity_lemma} proved the following fundamental
result in extremal graph theory.
\begin{theorem}[Szemer\'edi's regularity lemma]
 For every $\epsilon >0$ and every integer $m \ge 1$ there exists an integer $M$ such
 that the following holds. Every graph on $n \ge m$ vertices admits an $\epsilon$-regular
 partition $\{V_0,V_1,\dots,V_t\}$ with $m \le t \le M$.
\end{theorem}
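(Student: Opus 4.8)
The plan is to run the classical energy-increment argument. To a partition $\mathcal P$ of $V$ into sets $C_1,\dots,C_t$ (with $V_0$, when nonempty, regarded as a family of singletons) associate its \emph{index}
\[
q(\mathcal P):=\sum_{1\le i,j\le t}\frac{|C_i|\,|C_j|}{n^2}\,d(C_i,C_j)^2,
\]
a number in $[0,1]$. Two facts drive everything. First, refining a partition never decreases its index: within a single pair of clusters this is just the Cauchy--Schwarz inequality applied to the density function over the sub-pairs, and the global statement follows by summing. Second --- the quantitative \emph{defect} version --- if $(A,B)$ is not $\epsilon$-regular, witnessed by $X\subseteq A$ and $Y\subseteq B$ with $|X|\ge\epsilon|A|$, $|Y|\ge\epsilon|B|$ and $|d(X,Y)-d(A,B)|>\epsilon$, then splitting $A$ along $X$ and $B$ along $Y$ raises the contribution of this pair to the index by at least $\epsilon^{4}|A||B|/n^{2}$. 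Establishing this inequality cleanly, by a careful weighted Cauchy--Schwarz that tracks the four sub-densities against $d(A,B)$, is where essentially all of the real computation lives, and I expect it to be the main obstacle of the proof.

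Granting these two facts, I would iterate as follows. Start from an arbitrary partition of $V$ into $m$ classes of equal size plus a remainder $V_0$ of size $<m$. If the current partition $\mathcal P$, with $t$ classes, satisfies property $(3)$ of $\epsilon$-regularity we stop; otherwise more than $\epsilon t^{2}$ of the pairs of classes are irregular, and we refine \emph{every} class simultaneously along all the witnessing subsets that meet it. Each class is thereby cut into at most $2^{t}$ pieces, so the number of classes grows by a factor at most $2^{t}$; and, summing the defect estimate over the more than $\epsilon t^{2}$ irregular pairs --- each of which contributes at least $\epsilon^{4}|C_i||C_j|/n^{2}\approx\epsilon^{4}/t^{2}$ because the classes are (nearly) equal --- the index increases by more than $\tfrac12\epsilon^{5}$. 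Since $q\le 1$, this can occur at most $2\epsilon^{-5}$ times, so the process halts, and the number of classes stays bounded throughout by the $\lceil 2\epsilon^{-5}\rceil$-fold iterate of $t\mapsto t2^{t}$ applied to $m$ --- a finite constant $M=M(\epsilon,m)$, inevitably of tower type.

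Finally one must enforce the cosmetic parts of the definition: that the non-exceptional classes have exactly equal size and that $|V_0|\le\epsilon n$. After each refinement step, re-cut $V\setminus V_0$ into blocks of one common length $\ell$ dividing every current piece, and move the fewer than $\ell$ leftover vertices from each piece into $V_0$; choosing $\ell$ small enough (which is what forces $n$, hence $m$ and the admissible range of $n$, to be large) makes this cleanup discard only a minuscule fraction of the vertices at each step, fewer than $\epsilon n$ in total over the at most $2\epsilon^{-5}$ steps. Moving only such a tiny fraction of the vertices changes every density, and therefore $q$, by an arbitrarily small amount, so the cleanup loss is absorbed into the $\tfrac12\epsilon^{5}$ per-step gain and the termination count is unaffected. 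The partition returned when the process stops is then an $\epsilon$-regular partition with $m\le t\le M$, as required.
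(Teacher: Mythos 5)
The paper does not prove this statement at all: Szemer\'edi's regularity lemma is quoted as a known theorem with a citation to \cite{regularity_lemma} (and, in fact, the argument of the paper proceeds via the multipartite regularity lemma of Duke, Lefmann and R\"odl rather than via this statement directly), so there is no in-paper proof to compare you against. What you sketch is the standard energy-increment proof, essentially as in Diestel's exposition: the mean-square-density index, monotonicity under refinement via Cauchy--Schwarz, the defect inequality giving a gain of order $\epsilon^4|A||B|/n^2$ per irregular pair, simultaneous refinement of all classes, a gain of more than $\tfrac12\epsilon^5$ per round, hence at most $2\epsilon^{-5}$ rounds and a tower-type bound $M(\epsilon,m)$. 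That skeleton is correct and is the accepted route.

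As a standalone proof, however, your write-up has two gaps you should close. First, you explicitly defer the defect form of the Cauchy--Schwarz inequality (``where essentially all of the real computation lives''); since that inequality is the engine of the whole argument, a complete proof must actually carry out that computation, not just announce it, and likewise the claim that the equalization/cleanup losses can be absorbed into the $\tfrac12\epsilon^5$ gain needs the explicit bookkeeping (the per-step growth of $V_0$ must be bounded by something like $n/2^{t}$, which requires the initial number of classes to be taken large enough in terms of $\epsilon$). Second, your remark that choosing $\ell$ small ``forces $n$ \dots to be large'' is at odds with the statement being proved, which asserts the conclusion for \emph{every} $n\ge m$: for the bounded range $m\le n\le$ (some function of $\epsilon,m$) the cleanup argument as described does not apply, and one instead observes that the partition of $V$ into singletons (with $V_0=\emptyset$) is trivially $\epsilon$-regular and has $m\le t=n\le M$ once $M$ is chosen at least that threshold. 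With the defect inequality proved, the exceptional-set accounting made explicit, and the small-$n$ case handled by singletons, your argument becomes a complete and correct proof of the quoted theorem.
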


\subsubsection*{Regularity lemma for multipartite graphs}

Duke, Lefmann and Rödl~\cite{multi_regular} proved a version of the Regularity Lemma for multipartite graphs. We use this result
to show condition $b)$ of Lemma~\ref{lem:main}. For a more recent account of this result see the survey of Rödl and Schacht~\cite{reg_survey}.

Suppose that $G$ is an $r$-partite graph with vertex partition equal to $\{V_1,\dots,V_r\}$, and that every $V_i$ has
cardinality equal to $m$. For every $1 \le i \le r$, let $W_i \subset V_i$. We call the set of tuples $W_1 \times \cdots \times W_r$
a \emph{box}.\footnote{In \cite{multi_regular} they prefer the term \emph{cylinder}.} A box $W_1 \times \cdots \times W_r$
is called \emph{$\epsilon$-regular}, if for every $1 \le i < j \le r$, the pair $(W_i,W_j)$ is $\epsilon$-regular.
Let $\mathcal{P}$ be a partition of $V_1\times \cdots \times V_r$ into boxes. We say that $\mathcal{P}$ is an 
\emph{$\epsilon$-regular}
partition of $V_1\times \cdots \times V_r$ if all but at most $\epsilon m^r$ of the tuples 
$(v_1,\dots,v_r) \in V_1\times \cdots \times  V_r$
lie in non $\epsilon$-regular boxes. The result of~\cite{multi_regular} states that for every fixed $\epsilon>0$ there always exists an $\epsilon$-regular partition of $V_1\times \cdots \times V_r$ into  boxes, in which every box is not too small. The number
of such boxes is a function only of $\epsilon$ and $r$.
\begin{theorem}[Regularity lemma for multipartite graphs]\label{lem:multi_regular}
Let $G$ be an $r$-partite graph with vertex partition equal to $\{V_1,\dots,V_r\}$ and such that every $V_i$ has
cardinality equal to $m$. For every $\epsilon >0$ there exists an $\epsilon$-regular partition $\mathcal{P}$ of 
$V_1\times \cdots \times V_r$ such that:
\begin{enumerate}
 \item $|\mathcal{P}| \le 4^{r^2/\epsilon^5}$; and
 \item for every $W_1\times \cdots \times W_r \in \mathcal{P}$, and every $1 \le i \le r$ we have that
 \[|W_i| \ge \epsilon^{r^2/\epsilon^5}m.\]
\end{enumerate}
\end{theorem}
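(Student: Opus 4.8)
Since Theorem~\ref{lem:multi_regular} is the result of Duke, Lefmann and R\"odl quoted verbatim, what follows is a sketch of how one proves it. The plan is to run the energy-increment (mean-square-density) argument that underlies Szemer\'edi's regularity lemma, but directly on partitions of the product $V_1\times\cdots\times V_r$ into boxes, exploiting the fact that the multipartite notion of a regular partition only demands that \emph{few} $r$-tuples lie in irregular boxes.

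First I would attach an \emph{index} to every partition $\mathcal{P}$ of $V_1\times\cdots\times V_r$ into boxes. Each box $W_1\times\cdots\times W_r$ contributes, for every pair $1\le i<j\le r$, the combinatorial rectangle $W_i\times W_j$ inside the bipartite graph between $V_i$ and $V_j$; over all boxes these rectangles partition $V_i\times V_j$. Set
\[
  q(\mathcal{P}) \;:=\; \sum_{1\le i<j\le r}\ \sum_{W_i\times W_j}\ \frac{|W_i|\,|W_j|}{m^2}\, d(W_i,W_j)^2 ,
\]
so that $0\le q(\mathcal{P})\le\binom{r}{2}$. Two facts are standard and easy: refining $\mathcal{P}$ never decreases $q$; and, by the defect form of the Cauchy--Schwarz inequality, if a pair $(W_i,W_j)$ sitting inside a box is not $\epsilon$-regular, witnessed by $X\subseteq W_i$ and $Y\subseteq W_j$ with $|X|\ge\epsilon|W_i|$, $|Y|\ge\epsilon|W_j|$, and $|d(X,Y)-d(W_i,W_j)|>\epsilon$, then subdividing $W_i$ along $X$ and $W_j$ along $Y$ raises the contribution of that rectangle to $q$ by at least $\epsilon^{4}\,|W_i|\,|W_j|/m^{2}$.

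Next comes the cleaning step. If $\mathcal{P}$ is not an $\epsilon$-regular partition, then more than $\epsilon m^{r}$ tuples lie in non-$\epsilon$-regular boxes; since a box fails to be regular exactly when at least one of its $\binom{r}{2}$ coordinate pairs does, averaging produces a single pair of indices $(i,j)$ together with a family of irregular rectangles $W_i\times W_j$ of total weight $\sum |W_i|\,|W_j|/m^{2}\ge\epsilon/\binom{r}{2}$. Subdividing these rectangles along their witness sets (after thinning the family so that the refinement stays bounded) and pushing the subdivision back to the partitions of $V_i$ and of $V_j$ yields a refinement $\mathcal{P}'$ with $q(\mathcal{P}')-q(\mathcal{P})$ bounded below by a fixed quantity of order $\epsilon^{5}/r^{2}$. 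Because $q\le\binom{r}{2}$, this can recur only $O(r^{4}/\epsilon^{5})$ times, so after boundedly many cleaning steps we reach an $\epsilon$-regular partition. The crucial economy — and the reason the final bound is only singly exponential in $1/\epsilon$ rather than a tower — is that we never need to render a pair genuinely regular: each step only has to split the partitions of $V_i$ and $V_j$ by a bounded number of witness sets, so the number of parts of each $V_\ell$ grows only by a bounded factor per step. Keeping track of those factors over the $O(r^{4}/\epsilon^{5})$ steps gives $|\mathcal{P}|\le 4^{r^{2}/\epsilon^{5}}$ and, since a part can be halved at most once per step, $|W_i|\ge\epsilon^{r^{2}/\epsilon^{5}}m$.

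I expect the cleaning step to be the real obstacle, on two counts: converting the hypothesis ``many $r$-\emph{tuples} lie in irregular boxes'' into an honest index gain on a \emph{single} coordinate pair (a box can be irregular on several pairs at once, and one bad pair suffices), and then choosing \emph{which} irregular rectangles to subdivide so that the refinement stays thrifty enough to keep $|\mathcal{P}|\le 4^{r^{2}/\epsilon^{5}}$ and $|W_i|\ge\epsilon^{r^{2}/\epsilon^{5}}m$. One could instead apply ordinary Szemer\'edi regularity to $V_1\cup\cdots\cup V_r$ and intersect the resulting parts with the classes $V_\ell$, discarding the pieces that are small inside their class; this does produce an $\epsilon$-regular box partition, but only with a tower-type bound on $|\mathcal{P}|$, which is too weak for the first of the two bounds claimed.
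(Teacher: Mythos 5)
The paper does not prove Theorem~\ref{lem:multi_regular}: it is imported verbatim from Duke, Lefmann and R\"odl~\cite{multi_regular}, so there is no in-paper argument to compare yours against, and citing~\cite{multi_regular} is all the paper itself does. Judged as a reconstruction of the known proof, your sketch has the right engine --- a mean-square-density increment run on partitions of $V_1\times\cdots\times V_r$ into boxes, with the single-exponential bound coming from the fact that each box is refined locally --- but two of its steps are wrong or unresolved as stated.

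First, your index is ill-defined. For a partition of the product into boxes, the projections $W_i\times W_j$ of distinct boxes onto a fixed coordinate pair do \emph{not} partition $V_i\times V_j$: the boxes $\{a,a'\}\times\{b\}\times\{c\}$ and $\{a\}\times\{b\}\times\{c'\}$ can coexist in a box partition of a tripartite product, yet their projections onto the first two coordinates overlap. So the weights $|W_i|\,|W_j|/m^2$ do not sum to $1$ and your sum ``over rectangles'' has no meaning; the correct index weights $d(W_i,W_j)^2$ by the full box volume $\prod_{\ell}|W_\ell|/m^r$. Second, the cleaning step in~\cite{multi_regular} does not average down to a single coordinate pair: one refines \emph{every} non-$\epsilon$-regular box along the witness sets of \emph{all} of its irregular pairs simultaneously, so that within one box each $W_i$ splits into at most $2^{r-1}$ atoms, the box splits into at most $4^{\binom{r}{2}}$ sub-boxes, and the minimum part shrinks by a factor of at most $2^{r-1}$ per iteration. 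Since more than $\epsilon m^r$ tuples lie in irregular boxes, the defect Cauchy--Schwarz gain summed over those boxes is of order $\epsilon^5$, the index is bounded, and $O(1/\epsilon^5)$ iterations give both stated bounds. Your single-pair variant only charges a box for one of its bad pairs and, as you concede, leaves the bookkeeping for $|\mathcal{P}|\le 4^{r^2/\epsilon^5}$ unresolved; moreover there is no global ``partition of $V_i$'' to push a subdivision back to --- the refinement is internal to each box, and that locality (no common refinement across boxes, unlike in Szemer\'edi's original lemma) is precisely what replaces the tower-type bound by a single exponential.
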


To apply the regularity lemmas we need to reason about
$\epsilon$-regular partitions of graphs.
For this purpose we define some graphs and extend our definition of density to tuples
and boxes. Let $v=(v_1,\dots,v_r) \in V_1 \times \cdots \times V_r$, and
let $G[v]$ be the subgraph of $G$ induced by the set of vertices $\{v_1,\dots,v_r\}$. We define the
\emph{density}, $d(v)$,  of the tuple $v$ as the density of $G[v]$. Thus,
\[d(v)=\frac{||G[v]||}{\binom{r}{2}}.\]

Let $W=W_1 \times \cdots \times W_r \in \mathcal{P}$ . We define
the \emph{density} of $W$ as
\[d(W):=\frac{\sum_{v \in W}d(v)}{|W|}.\]
For $0 \le \delta \le 1$, let $R(W,\delta)$
be the graph whose vertex set is equal to $\{W_1,\dots,W_r\}$, and in which $W_i$ is adjacent to $W_j$
if the density of the pair $(W_i,W_j)$ is at least $\delta$.

\begin{lemma}\label{lem_R}
Let $G$ be a dense graph on $n$ vertices. Then for every $ 0 <\epsilon \le d(G)/2$ and 
every positive integer $r$, there exist a set $\{W_1,\dots,W_r\}$ of disjoint subsets of vertices
of $G$, such that the following hold. 
\begin{itemize}
 \item[$(1)$] $|W_i| \ge \epsilon^{r^2/\epsilon^5} \frac{n}{r},$  for every $i=1,\dots,r$;
 
 \item[$(2)$] $W:=W_1\times \cdots \times W_r$ is $\epsilon$-regular; and

 \item[$(3)$] $ d \left ( R \left (W,\frac{d(G)}{4-d(G)}\right )  \right )  \ge \frac{d(G)}{4}.$
\end{itemize}
\begin{proof}
Assume that $V(G):=\{1,\dots,n\}$
 If necessary, iteratively remove minimum degree vertices from $G$ so that the number
 of vertices remaining is a multiple of $r$. Note that these operations do not decrease
 the density of $G$. In what follows we assume that $n$ is divisible by $r$. 
 
 Let $V_1,\dots,V_r$ a partition of the vertices of $G$, chosen uniformly at random
 among all the partitions of the vertices of $G$ into $r$ sets of cardinality $n/r$ each.
 Let $G'$ be the $r$-partite graph with partition $V_1,\dots,V_r$, in which $v \in V_i$
 is adjacent to $w \in V_j$ if $vw$ is an edge of $G$. Let $A:=\{a_1,\dots,a_r\}$ be a 
 set of $r$ vertices of $G$. Let $E_A$ be the event that there exists a tuple
 $v=(v_1,\dots,v_r) \in V_1 \times \cdots \times V_r$ such that
 $A=\{v_1,\dots,v_r\}$.

 We can compute $\operatorname{Prob}(E_A)$ by considering the partition of $V(G)$ given by 
  \[\{1,\dots,n/r\},\{n/r+1,\dots,2(n/r)\},\dots \{n-r+1,\dots,n\}.\]
 Let $\sigma$ be a random permutation of the vertices of $G$.
 Note that  \[\{\sigma(1),\dots,\sigma(n/r)\},\{\sigma(n/r+1),\dots,\sigma(2(n/r))\},\dots \{\sigma(n-r+1),\dots,\sigma(n)\}\] produces a random partition of the vertices of $G$, chosen uniformly at random
 among all the partitions of the vertices of $G$ into $r$ sets of cardinality $n/r$ each.
 The number of permutations in which the $\sigma(a_i)$ lie in different sets of the partition
 is equal to $r!(n/r)^r(n-r)!$. Therefore, 
  \[\operatorname{Prob}(E_A)=\frac{r! (n/r)^r(n-r)!}{n!}=\left ( \frac{n}{r} \right )^r  \binom{n}{r}^{-1}.\]
  
Since the endpoints of every edge of $G$ lie in $\binom{n-2}{r-2}$ subsets of $V(G)$ of cardinality $r$, we have
that
\begin{align*}
\sum_{\substack{A \subset V(G) \\ |A|=r }} d(G[A]) & =\sum_{\substack{A \subset V(G) \\ |A|=r }} \binom{r}{2}^{-1} ||G[A]|| \\ 
& =\binom{r}{2}^{-1} \binom{n-2}{r-2} ||G|| \\
& =\binom{r}{2}^{-1} \binom{n-2}{r-2} \binom{n}{2} d(G) \\
&\frac{2\cdot (n-2)! \cdot n \cdot (n-1) }{r\cdot (r-1) \cdot (n-2-(r-2))! \cdot (r-2)! \cdot 2} \cdot d(G) \\
& = \binom{n}{r} d(G).
\end{align*}
Let $X_A$ be the indicator random variable  associated to $E_A$. By linearity of 
expectation we have that 
\begin{align*}
E[d(V_1 \times \cdots \times V_r)] &=\left ( \frac{n}{r} \right)^{-r}\sum_{\substack{A \subset V(G) \\ |X|=r }} E[X_A] d(G[A]) \\
& =\left ( \frac{n}{r} \right)^{-r}\left ( \frac{n}{r} \right )^r \cdot \binom{n}{r}^{-1}\sum_{\substack{A \subset V(G) \\ |X|=r }} d(G[A]) \\
&= d(G).
\end{align*}
Thus, there exist a choice for $V_1,\dots, V_r$ is such that $d(V_1,\times \cdots \times V_r)\ge d(G).$
In what follows, assume that this is the case.

Let $\mathcal{P}$ be the $\epsilon$-partition of $V_1 \times \cdots \times V_r$  given by Lemma~\ref{lem:multi_regular}.
Let $\mathcal{P}'$ be the set of boxes in $\mathcal{P}$ that are $\epsilon$-regular.
Since $\mathcal{P}$ is $\epsilon$-regular, we have that
\begin{align*} 
\sum_{W \in \mathcal{P}'} d(W) |W|& =\sum_{W \in \mathcal{P}} d(W) |W|-\sum_{W \in \mathcal{P}\setminus \mathcal{P'}} d(W) |W| \\
& \ge d(G)\left( \frac{n}{r}\right)^r-\epsilon \left (\frac{n}{r} \right)^{r} \\
&= (d(G)-\epsilon)\left ( \frac{n}{r} \right )^r \\
&\ge  \frac{d(G)}{2} \left ( \frac{n}{r} \right )^r.
\end{align*}
Suppose that for all $W \in \mathcal{P}'$
we have that $d(W) < d(G)/2$. Thus,
\[\sum_{W \in \mathcal{P}'} d(W) |W| < \frac{d(G)}{2}\sum_{W \in \mathcal{P}'} |W| \le \frac{d(G)}{2} \left ( \frac{n}{r} \right)^r; \]
this is a contradiction. Therefore, there exist 
$W=W_1\times \cdots \times  W_r \in \mathcal{P}'$ such that $d(W) \ge d(G)/2$.

Let $xy \in E(W_i,W_j)$ for some $1 \le i <j \le r$. Note that there are exactly
\[\prod_{l \neq i,j}|W_l|\]
tuples $u \in W$ such that $xy \in E(G[u])$. This implies that

\begin{align*}
 \sum_{v \in W} d(v)& =\sum_{v \in W} \frac{||G[u]||}{\binom{r}{2}} \\
 &= \binom{r}{2}^{-1}\sum_{1 \le i <j \le r} \left ( |E(W_i,W_j)|\prod_{l \neq i,j}|W_l| \right ) \\
 &= \binom{r}{2}^{-1}\sum_{1 \le i <j \le r} \left ( d(W_i,W_j) |W_i||W_j|\prod_{l \neq i,j}|W_l| \right ) \\
 &= \binom{r}{2}^{-1}\sum_{1 \le i <j \le r}  d(W_i,W_j) |W|.
\end{align*}
Therefore,
\[\sum_{1 \le i <j \le r}d(W_i,W_j)= \binom{r}{2} d(W) \ge \frac{d(G)}{2} \binom{r}{2}.  \]

Let $E'=E(R(W,d(G)/(4-d(G)))$.
We have that
\begin{align*}
 \frac{d(G)}{2} \binom{r}{2} & \le \sum_{1 \le i <j \le r}d(W_i,W_j) \\
 &=\sum_{W_iW_j \in E'}d(W_i,W_j)+\sum_{W_iW_j \notin E'}d(W_i,W_j)\\
 & \le \frac{d(G)}{4-d(G)}\left (\binom{r}{2}-|E'| \right )+|E'| \\
 &= \left( 1- \frac{d(G)}{4-d(G)} \right )|E'|+\frac{d(G)}{4-d(G)}\binom{r}{2}.
\end{align*}

Therefore,
\begin{align*}
 |E'| & \ge \left ( \left (  \frac{d(G)}{2}-\frac{d(G)}{4-d(G)}\right) \bigg /\left( 1- \frac{d(G)}{4-d(G)} \right ) \right ) \binom{r}{2}\\
 & = \left ( \left (  \frac{d(G)(4-d(G))-2d(G)}{2(4-d(G))}\right) \bigg /\left(\frac{4-d(G)-d(G)}{4-d(G)}\right ) \right ) \binom{r}{2}\\
 & = \left ( \left (  \frac{d(G)(2-d(G))}{2(4-d(G))}\right) \bigg /\left(\frac{2(2-d(G))}{4-d(G)}\right ) \right ) \binom{r}{2}\\
 &=\frac{d(G)}{4} \binom{r}{2}.
\end{align*}
The result follows.

\end{proof}

\end{lemma}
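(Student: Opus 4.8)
The plan is to obtain $W_1,\dots,W_r$ as one of the boxes in an $\epsilon$-regular partition produced by the multipartite regularity lemma (Theorem~\ref{lem:multi_regular}), applied to a carefully chosen balanced $r$-partition of $V(G)$. First I would dispose of divisibility: repeatedly deleting a vertex of minimum degree never decreases the density, so after at most $r-1$ such deletions we may assume $r\mid n$ (for $n$ large the rounding this introduces is absorbed into the constants). Next I would pass to an $r$-partite ``skeleton'': partition $V(G)$ into classes $V_1,\dots,V_r$ of size $n/r$ and keep only the edges between distinct classes. To be sure this skeleton still carries a constant fraction of the edge density, I would pick the partition uniformly at random among balanced $r$-partitions and compute the expectation of the average tuple density $d(V_1\times\cdots\times V_r)=(n/r)^{-r}\sum_{v}d(v)$, where $d(v)$ is the edge density of the subgraph induced by the $r$ points of a transversal $v$. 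By linearity of expectation this equals $(n/r)^{-r}\sum_{A}\Pr[A\text{ is a transversal}]\,d(G[A])$, the sum ranging over $r$-subsets $A$; one checks $\Pr[A\text{ is a transversal}]=(n/r)^r\binom{n}{r}^{-1}$, and by double counting the incidences between edges of $G$ and $r$-subsets that $\sum_A d(G[A])=d(G)\binom{n}{r}$, so the expectation is exactly $d(G)$. Hence I may fix a partition with $d(V_1\times\cdots\times V_r)\ge d(G)$.

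Now I would apply Theorem~\ref{lem:multi_regular} with parameter $\epsilon$ to this $r$-partite graph, obtaining an $\epsilon$-regular partition $\mathcal P$ of $V_1\times\cdots\times V_r$ into at most $4^{r^2/\epsilon^5}$ boxes, each with all classes of size at least $\epsilon^{r^2/\epsilon^5}(n/r)$; this already secures condition $(1)$ and, for whichever box we select, condition $(2)$. To choose the right box I would average over the $\epsilon$-regular boxes $\mathcal P'$: since $\sum_{W\in\mathcal P}d(W)|W|=\sum_v d(v)\ge d(G)(n/r)^r$, while the non-regular boxes together contain at most $\epsilon(n/r)^r$ transversals and so contribute at most $\epsilon(n/r)^r$ to this sum, we get $\sum_{W\in\mathcal P'}d(W)|W|\ge(d(G)-\epsilon)(n/r)^r\ge\tfrac{d(G)}{2}(n/r)^r$ by the hypothesis $\epsilon\le d(G)/2$. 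A further averaging then produces an $\epsilon$-regular box $W=W_1\times\cdots\times W_r$ with $d(W)\ge d(G)/2$.

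It remains to convert ``$d(W)$ large'' into condition $(3)$. The bridge is the identity $d(W)=\binom{r}{2}^{-1}\sum_{i<j}d(W_i,W_j)$, obtained by counting, for each edge $xy\in E(W_i,W_j)$, the $\prod_{\ell\neq i,j}|W_\ell|$ transversals of $W$ whose induced subgraph contains $xy$; this gives $\sum_{i<j}d(W_i,W_j)\ge\tfrac{d(G)}{2}\binom{r}{2}$. Writing $E'$ for the edge set of $R(W,\delta)$ with $\delta:=d(G)/(4-d(G))$, I would bound each summand by $1$ on $E'$ and by $\delta$ off $E'$, obtaining $\tfrac{d(G)}{2}\binom{r}{2}\le(1-\delta)|E'|+\delta\binom{r}{2}$; solving for $|E'|$ and simplifying the rational expression in $d(G)$ yields $|E'|\ge\tfrac{d(G)}{4}\binom{r}{2}$, that is $d(R(W,\delta))\ge d(G)/4$, as wanted. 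I expect the main difficulty to lie in the probabilistic step of the first paragraph — pinning down the transversal probability and the edge-to-$r$-subset incidence count precisely enough that the expectation comes out to exactly $d(G)$ rather than merely a constant multiple — and, relatedly, in being careful that the ``average tuple density'' of a box and the pairwise densities $d(W_i,W_j)$ are matched up by the correct combinatorial identity; once these are in place, the regularity-lemma application and the closing averaging and counting are routine.
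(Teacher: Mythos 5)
Your proposal is correct and follows essentially the same route as the paper: the same random balanced partition and expectation computation showing $E[d(V_1\times\cdots\times V_r)]=d(G)$, the same application of the multipartite regularity lemma followed by averaging over regular boxes to find $W$ with $d(W)\ge d(G)/2$, and the same counting identity and threshold argument with $\delta=d(G)/(4-d(G))$ to conclude $|E'|\ge\frac{d(G)}{4}\binom{r}{2}$. No substantive differences to report.
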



\subsection*{Pairwise crossing edges in geometric graphs}

Aronov, Erd\H{o}s, Goddard, Kleitman, Klugerman, Pach and Schulman~\cite{crossing_families}
showed that every complete geometric graph on $n$ vertices contains $\sqrt{n}/12$ pairwise crossing edges. 
Pach, Rubin and Tardos~\cite{many_crossing}  improved and generalized this bound.
They showed that every dense graph on $n$ vertices contains 
$n^{1-o(1)}$ pairwise crossing edges. Recently, the constant in the O-notation was improved
for complete geometric graphs by Suk and Zeng~\cite{positive}.

It is an open problem (Chap 9, Problem 1~\cite{research_problems}) to show that
for every positive integer $k>3$ there exists a constant $c_k >0$ such that every geometric graph on $n$ vertices
and more than $c_kn$ edges contains $k$ pairwise crossing edges. 
In this direction Valtr~\cite{pavel}, showed the following result.
\begin{theorem}\label{thm:pavel}
 Let $k$ be positive integer. A geometric graph on $n$ vertices without $k$ pairwise crossing
 edges contains at most $O(n \log n)$ edges.
\end{theorem}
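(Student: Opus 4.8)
The plan is to prove Theorem~\ref{thm:pavel} by a divide-and-conquer argument that reduces everything to the special case in which all edges cross one common line. Fix $k$, let $f(n)$ denote the maximum number of edges of an $n$-vertex geometric graph with no $k$ pairwise crossing edges, and let $L(n)$ denote the analogous maximum taken only over those graphs all of whose edges cross some common line. First I would pick a direction along which the $n$ points have pairwise distinct coordinates (a generic direction; perturb if necessary), and let $\ell$ be the orthogonal line splitting the points into a set of size $\lfloor n/2\rfloor$ on one side and the rest on the other. Each edge of $G$ either stays on one side of $\ell$, and then belongs to one of two geometric graphs on at most $\lceil n/2\rceil$ points, each still free of $k$ pairwise crossing edges, or it crosses $\ell$, and there are at most $L(n)$ such edges. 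Hence $f(n)\le 2f(\lceil n/2\rceil)+L(n)$, and it suffices to show $L(n)=O_k(n)$, since then $f(n)=O_k(n\log n)$.

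The core claim is therefore: an $n$-vertex geometric graph $H$ with no $k$ pairwise crossing edges, all of whose edges cross a fixed line $\ell$, has $O_k(n)$ edges. Take $\ell$ horizontal, let $A$ be the vertices above $\ell$ and $B$ those below; since every edge crosses $\ell$, the graph $H$ is bipartite with parts $A,B$. Sort $A$ left to right as $a_1,\dots,a_s$ and $B$ as $b_1,\dots,b_t$, so $s+t\le n$, and regard each edge as a pair $(a_i,b_j)$. One would like to read crossings off the indices, but this fails literally: two edges $(a_i,b_j)$ and $(a_{i'},b_{j'})$ with $i<i'$ and $j<j'$ can still cross, far from $\ell$, if one of them is nearly parallel to $\ell$ and the other nearly perpendicular to it. The remedy is to partition $E(H)$ into a bounded number $N=N(k)$ of subfamilies according to the \emph{combinatorial type} of an edge relative to $\ell$ (for instance by discretising the angle the edge makes with $\ell$ and refining by the left--right positions of its endpoints), chosen so that within one subfamily $F$ every crossing is forced to occur ``on the same side'' of $\ell$ and the relation ``$e$ crosses $e'$'' on $F$ becomes the comparability relation of a partial order whose chains are exactly the pairwise-crossing sets and whose antichains are exactly the pairwise non-crossing (hence plane) sets.

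Granting such a decomposition, the claim follows at once. Fix a subfamily $F$. Since $H$, and hence $F$, has no $k$ pairwise crossing edges, the associated poset has no chain of $k$ elements, so it is a union of at most $k-1$ antichains (Mirsky's theorem); each antichain spans a plane geometric graph on at most $n$ vertices, hence has at most $3n-6$ edges. Thus $|F|\le (k-1)(3n-6)$, and summing over the $N(k)$ subfamilies gives $|E(H)|\le N(k)(k-1)(3n-6)=O_k(n)$; plugged into the recursion this yields $f(n)=O_k(n\log n)$. (The last step can equivalently be phrased via the Erd\H{o}s--Szekeres / patience-sorting argument applied to the index sequence of $F$.)

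The hard part will be exactly the construction and verification of the subfamily decomposition described above: the ``crosses'' relation on the set of all edges meeting a common line is genuinely not transitively orientable (one pair of edges can cross above $\ell$ while another crosses below it), so one must isolate finitely many geometric types within which crossings are confined to one side of $\ell$, and then check that on each type the relation is a comparability relation with chains and antichains matching pairwise-crossing and plane subfamilies. This case analysis is where essentially all of the geometric content of the theorem is concentrated; the recursion, the bipartite reduction, and the Mirsky/Dilworth step are routine.
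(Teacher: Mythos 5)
This statement is not proved in the paper at all: it is Valtr's theorem, quoted from~\cite{pavel} and used as a black box, so there is no internal proof to compare against. Judged on its own, your proposal correctly reproduces the high-level architecture of Valtr's actual argument --- a halving-line recursion $f(n)\le f(\lfloor n/2\rfloor)+f(\lceil n/2\rceil)+L(n)$ that reduces everything to the case of a geometric graph all of whose edges cross a common line $\ell$, for which one must prove a linear bound $L(n)=O_k(n)$. That reduction, and the Mirsky/Dilworth endgame (no chain of length $k$ implies a union of $k-1$ antichains, each antichain being a plane graph with at most $3n-6$ edges), are fine.

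The genuine gap is that the entire content of the theorem sits in the lemma $L(n)=O_k(n)$, and you do not prove it: you posit a partition of the edges crossing $\ell$ into $N(k)$ subfamilies on each of which ``crosses'' becomes the comparability relation of a partial order, and then you explicitly concede that constructing and verifying this partition is ``the hard part.'' A proof that defers its hard part is not a proof, and the mechanism you sketch is unlikely to work as stated. First, no discretisation of slopes into a number of classes depending only on $k$ can confine all crossings within a class to one side of $\ell$: whether two edges cross to the left or to the right of $\ell$ depends on the interleaving of their intersection points with $\ell$ and on arbitrarily fine slope differences, so the required granularity would grow with $n$. Second, even after restricting to crossings on one side, the crossing relation on segments grounded on a line is the edge relation of an outerstring-type intersection graph, which is not in general transitively orientable; so ``chains $=$ pairwise crossing families, antichains $=$ plane subgraphs'' is an assertion that needs a real argument, not a definition. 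Valtr's own proof of the linear bound for edges crossing a common line goes through a more delicate combinatorial analysis (an extremal bound for the resulting ordered structure) rather than a bounded-size type decomposition plus Mirsky; to complete your write-up you would need to either supply that analysis or an actual construction of the claimed decomposition with $N=N(k)$.
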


We are ready to prove Theorem~\ref{thm:main}.

\section{proof of Theorem~\ref{thm:main}}

Let $G=(V,E)$ be a dense geometric graph on $n$ vertices, with $n$ sufficiently large, and density $d>0$, and let $k \ge 2$ be an integer. 
To prove Theorem~\ref{thm:main}, we show that

\NMyQuote{there exist positive constants $c_1,c_2$ and $c_3 < c_2/2$, such that conditions $a)$,$b)$,$c)$ and $d)$ of Lemma~\ref{lem:main} hold.}

By Theorem~\ref{thm:pavel}, there exists a positive integer $r$ (depending only on $d$ and $k$) such that every geometric graph on $r$ or more vertices, of
density at least $d/4$ contains $k$ pairwise crossing edges. Let $c(r)$ be as in the Same-type lemma. Let 
\[0 \le \epsilon < \min \left \{c(r),\frac{d}{4-d}-\frac{d}{4} \right \}.\] 
Simple arithmetic shows that since $\epsilon \le \frac{d}{4-d}-\frac{d}{4}$, we have
that $\epsilon <d/2$. 
Let $W_1,\dots,W_r$ be the disjoint subsets of vertices of $G$ given 
by Lemma~\ref{lem:multi_regular}. By the Same-type lemma there exist  $W_1'\subset W_1, \dots, W_r'\subset W_r$ such
that $(W_1',\dots,W_r')$ has same type transversals and $|W_i'|\ge c(r) |W_i|$ for all $i=1,\dots,r$. 
Let \[c_1:= \frac{c(r)\epsilon^{r^2/\epsilon^5}}{r}.\]
Thus,
\[|W_i'|\ge c(r) |W_i| \ge \frac{c(r)\epsilon^{r^2/\epsilon^5}}{r}  n = c_1 n.\]

Let $(u_1,\dots,u_r)$ a transversal of $(W_1',\dots,W_r')$. Let $G'$ be the geometric graph whose vertex set
is equal to $\{u_1,\dots,u_r\}$; in which $u_i$ is adjacent to $u_j$, if 
\[d(W_i,W_j) \ge \frac{d}{4-d}.\]
By $(3)$ of Lemma~\ref{lem:multi_regular} and our choice of $r$, we have that $G'$ contains
$k$ pairwise crossing edges, $e_1,\dots,e_k$.
For every $1 \le i \le k$, let $Y_i,Z_i \in \{W_1',\dots, W_r'\}$ such that
$e_i$ has an endpoint in $Y_i$ and an endpoint in $Z_i$. This proves condition $a)$. 

Let \[c_2 :=\frac{d}{4} c_1^2.\] Let $1 \le i < j \le r$.
Since $W_1\times \cdots \times W_r$ is $\epsilon$-regular, $\epsilon < c(t)$, and $e_i$ is an edge
of $G'$, we have that
\[d(X_i,Y_i) \ge \frac{d}{4-d}-\epsilon \ge \frac{d}{4}\]
Thus,
\[E(X_i,Y_i) \ge \frac{d}{4} |X_i||Y_i| \ge  \frac{d}{4} c_1^2 n^2=c_2 n^2.\]
This proves condition $b)$. Since the $e_i$ are pairwise crossing and  $(W_1',\dots,W_r')$ has
same type transversals we have condition $c)$.
Finally, as noted above ($\ast$), condition $d)$ and the existence of  $c_3$ follows from condition $b)$ and the existence of $c_2$.
This completes the proof of Theorem~\ref{thm:main}.

\section*{Acknowledgments}

I thank Irene Parada and Birgit Vogtenhuber for various helpful discussions. I also thank the anonymous
reviewer who found a crucial flaw in a previous version of this paper.


\small \bibliographystyle{alpha} \bibliography{refs_mono_crossings}

\end{document}